\journalname{Bulletin of Mathematical Biology}
\begin{document}

\title{Variable Susceptibility With An Open Population}
\subtitle{A Transport Equation Approach}
\titlerunning{Transport Equations and Open Populations}
\author{Benjamin R. Morin}
\institute{Benjamin R. Morin
\at ecoSERVICES Group\\
School of Life Sciences, Arizona State University\\
PO Box 874501, Tempe, Arizona 85287-4501, USA\\\email{brmorin@asu.edu}}
\maketitle
\abstract{Variable individual response to epidemics may be found within many contexts in the study of infectious diseases (e.g., age structure or contact networks). There are situations where the variability, in terms of epidemiological parameter, cannot be neatly packaged along with other demographics of the population like spatial location or life stage. Transport equations are a novel method for handling this variability via a distributed parameter; where particular parameter values are possessed by various proportions of the population. Several authors (e.g., Kareva, Novozhilov, and Katriel) have studied such systems in a closed population setting (no births/immigrations or deaths/emigrations), but have cited restrictions to employing such methods when entry and removal of individuals is added to the population.  This paper details, in the context of a simple susceptible-infectious-recovered (SIR) epidemic, how the method works in the closed population setting and gives conditions for initial, transient, and asymptotic results to be equivalent with the nondistributed case. Additionally, I show how the method may be applied to various forms of open SIR systems.  Transport equations are used to transform an infinite dimensional system for the open population case into a finite dimensional system which is, at the very least, able to be numerically studied, a model with direct inheritance of the distributed parameter is shown to be qualitatively identical to the nondistributed case, and finally a model where disease results in sterilization is fully analyzed.}
\keywords{Papillomaviruses \and Transport Equations \and Distributed Parameters \and Differential Equations }
\subclass{92D30}
\section{Introduction}
Epidemic compartmental models typically consider a  population broken up into a number of compartments that describe the individual disease state of its members. Flow between these compartments is driven by a set of biologically motivated parameters commonly assumed to be the average quantity for a given population (e.g., the average duration of infection).  Within this paper it is assumed that the population is not homogenous in its parameters (i.e., individuals may be stratified by a differential response to disease in some way). This has certainly been entertained in many contexts such as spatial heterogeneity (e.g., networks and traveling wave solutions) \citep{gertsbakh1977epidemic, pastor2001epidemic,moreno2002epidemic,newman2006structure,house-et-al-09,joo2004pair,hoppensteadt1975mathematical,castillo1998global,busenberg1988endemic}, longitudinal behavior heterogeneity \citep{herrera2011multiple,fenichel2011adaptive}, and, most relevantly here, differential disease response due to age \citep{hoppensteadt1975mathematical,castillo1998global,busenberg1988endemic}. 

Work by Karev \citep{karev2005dynamic,karev2005dynamicsb}, utilized by Novozhilov \citep{novozhilov2008spread} and Katriel \citep{katriel2012size}, considered such heterogeneities for models but approached the characterization of the dynamics from a different mathematical construction.  Rather than a proliferation of compartments or a partial differential formulation they utilize transport equations/variables, specifically within the context of epidemic spread within a closed population. Phenomenologically, the heterogeneity of disease response has had general theory applied to it via the use of transport equations/variables for models of the form
\begin{eqnarray}
\nonumber\dot{X}(t,w)&=&X(t,w)F(X(t,w;\vec{\theta}),\vec{Y(t)},t;\vec{\theta}),\\
\vec{\dot{Y}(t)}&=&G(X(t,w),\vec{Y(t)},t;\vec{\theta}).\label{general}
\end{eqnarray}
 System \ref{general} includes a state variable, $X$, which depends on a parameter $w$. This parameter is unique from the other system parameters, contained in $\vec{\theta}$, in that it is distributed within the population. In other words $w$ is a random variable with $\frac{X(t,w)}{\int X(t,w)dw}$ being the proportion of the type-$X$ population possessing the parameter value $w$, the time dependent probability distribution if you will. This formulation is most useful when the parameter value for a given individual is independent of time; thus, given the current state of the theory, susceptibility to a disease that evolves as one ages would be inappropriate to model given this technique. Indeed, the application of the transport equations to epidemics have often been in a closed-population $SIR$  setting.  Within an $SIR$ setting individuals are either susceptible to a disease, $S$, infected/infectious, $I$, or have recovered and are now immune to reinfection, $R$.  Defining the rate that a contact between a susceptible and an infectious individual results in a new infection, $\beta$, and an individual recovery rate from the disease, $\gamma$, while additionally assuming contacts are made at random within a population results in the system
\begin{eqnarray*}
\dot{S}(t)&=&-\beta S(t)\frac{I(t)}{N},\\
\dot{I}(t)&=&\beta S(t)\frac{I(t)}{N}-\gamma I(t),\\
\dot{R}(t)&=&\gamma I(t),
\end{eqnarray*}
where $\dot{X}(t)=\frac{dX(t)}{dt}$. The inclusion, in Section \ref{ClosedSIR}, of the partially replicated results of introducing variable susceptibility into an $SIR$ model as above is meant to explain the use of the theory used to study equation \ref{general}, and includes brief discussion of {\em why it works}. Subsection \ref{equivalence} demonstrates a different technique for proving the equivalence of asymptotic behavior for distributed models of the form given in System \ref{general} and their undistributed counter parts\footnote{This dynamic equivalence is similar to that of the age structured model, see \citep{Brauer01}.}. It is critical to note here, and reiterate throughout the paper, what is ment by dynamic equivalence. 

It is easy to verify that the rate of change of the distributed variable, $w$, is equal to its variance within the population at each moment in time (a version of Fisher's fundamental theorem \citep{karev2010mathematical}).  Therefore, as the parametrically heterogeneous epidemic system evolves with time, the basic statistics (e.g., expected value and variance) of $w$ will also change. As in \citep{kareva2012preventing, kareva2011niche}, and references contained therein, the distributed variable is first considered to take on a single value (i.e., as a random variable it has the delta distribution) and one constructs a bifurcation diagram for the now parametrically homogeneous system (i.e., the standard undistributed version of the differential equations). The bifurcation parameter, $\rho(w)$, is chosen to be a function of the (now homogeneous) distributed parameter. As the statistics of $w$ evolve with the parametrically heterogeneous system, the current time value of $\rho(w)$ ``travels" through the parametrically homogeneous phase-parametric portrait \citep{karev2006mathematical}. Therefore, the {\em dynamic equivalence}, between distributed and unditributed systems (parametrically hetero- and homo- geneous respectively) is ment to imply that the bifurcation diagram created from the undistributed case characterizes the nature of the fixed points of the distributed system and may be used to qualitatively discuss the transient behavior of the latter.

The representation theory explained in \citep{karev2005dynamic,karev2005dynamicsb} excludes models that exhibit {\em blue-sky births} (i.e., entries into the distributed class at a rate not proportional to the class itself), and such inclusions have been avoided by both Novozhilov and Katriel.  In the beginning of Section \ref{sec:BlueSky} it is demonstrated that a simple inclusion of births and deaths into the $SIR$ model results in a system where the transport system theory is not useful to its fullest, but {\em importantly} constructs an equivalent system that may be solved numerically without approximation outside of numerical methods.  A model where disease-state and susceptibility is passed onto offspring and a second where the disease results in permanent sterilization of the individual are discussed in Subsections \ref{subsec:Inher} and \ref{subsec:Steri}.  Both are shown to be fully applicable to the established theory, using similar analysis techniques as in \citep{kareva2012transitional}, with the former reducing to the study of the closed population model and the latter presenting novel dynamics.

\section{Differential Susceptibility $SIR$ With a Closed Population}\label{ClosedSIR}
Introduce susceptible variability\footnote{The full derivation of this model may be found in \citep{novozhilov2008spread}.} via a parameter $w$ and the resultant value of $\beta(w)$.  Assume that for all $w$, $0\leq\beta(w)<\infty$ (to ensure all populations remain positive and finite in finite time), and denote the susceptible individuals with the particular susceptibility of $w$ via $S(t,w)$. The resulting system takes on the form
\begin{eqnarray}
\nonumber\dot{S(t,w)}&=&-\beta(w)S(t,w)\frac{I(t)}{N},\\
\dot{I(t)}&=&\int\beta(w)S(t,w)dw\frac{I(t)}{N}-\gamma I(t),
\label{DistSIR}
\end{eqnarray}
with $R(t)$ omitted due to the constant population size. Susceptibility has no impact on an individual once infected and thus $I(t):=\int I(t,w)dw$, the total count of all infected individuals, is utilized within the incidence term. Based on the representation theory of Karev \citep{karev2005dynamic,karev2005dynamicsb,karev2010replicator}, one introduces a transport variable $\dot{q}(t)=-\frac{I(t)}{N}$ and through separation of variables for the susceptible class finds 
$$S(t,w)=S(0,w)e^{\beta(w)q(t)}.$$ 
Subsequently one may note that $S(t)=\int S(t,w)dw$ is the total susceptible population and that $S(t)$ satisfies
$$\dot{S}(t)=-\overline{\beta(t)}S(t)\frac{I(t)}{N},$$
where 
\begin{eqnarray*}
\overline{\beta(t)}&=\frac{\int\beta(w)S(t,w)dw}{\int S(t,w)dw}
&=\frac{d}{d\lambda}\left[\ln\left(M_\beta(0,\lambda)\right)\right|_{\lambda=q(t)}.
\end{eqnarray*}
$M_\beta(t,\lambda)$ is the moment-generating function of the time $t$ density of property $w$ within the susceptible population.  The system in Equation \ref{DistSIR} may be of arbitrarily large dimension, since $w$ may take on values along a continuum, and is now reduced to two non-autonomous differential equations and an integral expression\footnote{Interestingly, this method constructs non-autonomous differential equations which implies that one is still searching an infinite dimnesional solution space.  This has shifted the continuum of state variables onto a transport variable ODE and a time dependent parameter. However, with the introduction of this transport variable, coupled with the ability to solve $S(t,w)$ in terms of it and initial data, the integral expressions are ``solvable" numerically and may be represented via moment-generating functions of the initial data.}, or a {\em transport system} given by:
\begin{eqnarray}
\nonumber\dot{S}(t)&=&-\overline{\beta(t)}S(t)\frac{I(t)}{N},\\
\dot{I}(t)&=&\overline{\beta(t)}S(t)\frac{I(t)}{N}-\gamma I(t),\label{SystemSIR}\\
\nonumber\dot{q}(t)&=&-\frac{I(t)}{N},\\
\nonumber\overline{\beta(t)}&=&\frac{\int\beta(w)S(0,w)e^{\beta(w) q(t)}dw}{\int S(0,w)e^{\beta(w)q(t)}dw}.
\end{eqnarray}
It may be assumed that the initial condition for the distribution of $S(0,w)$ and the values for $\beta(w)$ are known.  

The system \ref{SystemSIR} may be recast where the incidence is a nonlinear function (useful for the calculation of final epidemic size relation). Denote the moment-generating function for the distribution describing the selection of an arbitrary susceptible individual with susceptibility $\beta(w)$ via
$$M_\beta(0,q(t))=\int \frac{S(0,w)}{S(0)}e^{\beta(w)q(t)}dw,$$
and rewrite $\dot{S}(t)$ by first noting
\begin{eqnarray*}
\frac{1}{S(t)}\frac{d}{dt}S(t)&=&
\frac{d}{d\lambda}\left[\ln\left(M_\beta(0,\lambda)\right)\right|_{\lambda=q(t)}\frac{d}{dt}q(t),\\
\frac{d\ln(S(t))}{dt}&=&\frac{d}{dt}\ln(M_\beta(0,q(t))),\\
\frac{S(t)}{S(0)}&=&M_\beta(0,q(t)),\\
q(t)&=&M_\beta^{-1}(0,S(t)/S(0)),
\end{eqnarray*}
to get
$$
\dot{S}(t)=-\frac{d}{d\lambda}M_\beta(0,\lambda)|_{\lambda=M_\beta^{-1}(0,S(t)/S(0))}S(0)\frac{I(t)}{N}.
$$
By the inverse function theorem\footnote{$(f^{-1})'(b)=\frac{1}{f'(a)}$ where $b=f(a)$.} this results in
$$
\dot{S}(t)=-\left[\frac{d}{d\lambda}M_\beta^{-1}(0,\lambda)|_{\lambda=S(t)/S(0)}\right]^{-1}S(0)\frac{I(t)}{N}=-h(S(t))\frac{I(t)}{N},
$$
where $h(S(t))$ is a non-linear function of $S(t)$.  Calculating $\frac{dS(t)}{dR(t)}$ results in the expression the final epidemic size must satisfy:
$$
S_\infty=S(0)M_\beta\left(0,\frac{S_\infty-N}{N\gamma}\right).
$$
Furthermore, as a straightforward application of the results in \citep{Blythe91}, $SIR$ models with this nonlinear form have a basic reproduction number\footnote{The limitations of such a quantity should be apparent in such a case where the infectivity is a function of time. Nevertheless, it is presented as a standard threshold computation.} $R_0=\frac{\overline{\beta(0)}}{\gamma}$.  With these two threshold quantities one may address when the distributed and non-distributed cases are identical (either initially, asymptotically, or both).

Equating the two basic reproduction numbers results in $\beta=\overline{\beta(0)}.$ Thus, if the traditional $\beta$ is chosen to be the initial mean of the distribution of the resultant infection rate, then the initial behavior of the two models is identical.  Supposing that the solution to $S_\infty=S(0)e^{-\frac{\beta}{\gamma}\left(1-\frac{S_\infty}{N}\right)}$ is identical to that of the distributed problem implies this solution must satisfy
$$e^{\frac{\beta}{\gamma}\frac{R_\infty}{N}}=M_\beta\left(0,\frac{1}{\gamma}\frac{R_\infty}{N}\right),$$
where $R_\infty$ is the limiting recovered population, identical on each side of the expression. Note that the $\beta$ on the left hand side is the particular  value (from the classical model) while that on the right is a distributed variable.  Since distributions are uniquely identified by their Moment-generating function, we may conclude that for identical final epidemic sizes the initial distribution of $w$ in the susceptible population must be delta.
Thus if the final size is identical between the distributed and non-distributed cases, then the initial epidemic behavior is identical; however, the converse is not true.  One may choose any number of initial distributions such that the mean at time zero is equivalent to $\beta$. This is particularly important when estimating parameters from initial epidemic data (the initial phase of exponential growth).  These estimations typically assume a delta distribution of infectivity and therefore may be used to incorrectly project final epidemic size which will {\em always be an over estimation of spread when compared to the distributed case}.  

\subsection{Dynamic Equivalence}\label{equivalence}

For the undistributed model, all points of the form $(S^*,0)$ are equilibria.  Qualitatively, this implies that fixed points where $S^*>\frac{\gamma}{\beta}N$ are unstable and where $S^*<\frac{\gamma}{\beta}N$ are stable. The fixed points for the transport system \ref{SystemSIR} may pose a particular challenge because the system is now non-autonomous. However, assuming a non-degenerate situation (i.e., $\overline{\beta(t)}\neq0$), the equilibria are still of the form $(S^*,0)$. The linearization of the distributed system gives the condition for stability as
$$S(t)<\frac{\gamma}{\overline{\beta(t)}}N.$$ 
One may show that the stability threshold may not create a complicated phase space where $\overline{\beta(t)}$ forms an implicit (in time) boundary which may induce oscillations (necessarily damped) in the phase space due to the monotonicity of $\overline{\beta(t)}$.  Define the threshold $T(t)=\frac{\gamma N}{\overline{\beta(t)}}$ and consider 
\begin{eqnarray}
\nonumber\frac{dT(t)}{dt}&=&-\frac{\gamma N}{\left(\overline{\beta(t)}\right)^2}\frac{d\overline{\beta(t)}}{dt},\\
\nonumber&=&\frac{\gamma Var(\beta(t))I(t)}{\left(\overline{\beta(t)}\right)^2}.
\label{Threshold}
\end{eqnarray}
By Equation \ref{Threshold} it is clear that $T(t)$ is monotonically increasing (furthermore, its slope approaches $0$  as $Var(\beta(t))$ approaches 0, i.e., as $\frac{S(t,w)}{S(t)}$ approaches a singular distribution).  Since $T(t)$ is monotonically increasing, the amount of the $S(t)$-axis in the phase space for which the points are stable is also increasing (non-decreasing in the event that $\frac{\gamma}{\overline{\beta(t)}}>1$ for some $t<\infty$). Since $I(t)$ begins to decrease once it crosses $T(t)$, and due to the monotonicity of $T(t)$, there is no way to induce an oscillation on $I(t)$ regardless of the distribution on $w$ (i.e., once $I(t)$ decreases it may not again increase).  Similarly, there will be no oscillations in either $S(t)$ or $S(t,w)$.  This is equivalent to arguing that $\overline{\beta(t)}$ is monotonically decreasing in time. Therefore, as it travels through the bifurcation diagram of the undistributed case it will only pass the homogeneous system's bifurcation point of $\frac{\gamma}{\beta}N$ once. I posit a lemma as the conclusion of this section:

\begin{lemma}[Closed SIR Equivalence/Worst Case Distribution] Due to the monotonicity of $\overline{\beta(t)}$ ,the transient and asymptotic qualitative behaviors of the distributed and non-distributed $SIR$ models are identical.  Additionally, the initial behavior and final epidemic size of the two models are identical if $S(0,w)=\delta_{\beta(w)-\beta}S(0)$; the initial behavior of the two models are identical if and only if $\overline{\beta(0)}=\beta$. Also due to the monotonic decrease in $\overline{\beta(t)}$, over all distributions chosen with equivalent initial mean, the most infection is produced by the delta distribution (non-distributed model).\label{lemma: distributed model}
\end{lemma}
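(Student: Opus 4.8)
The plan is to prove the three assertions separately, leaning throughout on the one structural fact already established — monotonicity of $\overline{\beta(t)}$ — together with Jensen's inequality and a scalar comparison principle.

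\emph{Transient and asymptotic equivalence.} First I would make the monotonicity quantitative. Since $\overline{\beta(t)}=\frac{d}{d\lambda}\ln M_\beta(0,\lambda)\big|_{\lambda=q(t)}$ and $\frac{d^{2}}{d\lambda^{2}}\ln M_\beta(0,\lambda)$ is the variance of $\beta(w)$ under the exponentially tilted density proportional to $S(0,w)e^{\beta(w)\lambda}$, the chain rule with $\dot q=-I/N$ gives $\frac{d\overline{\beta(t)}}{dt}=-\,\mathrm{Var}(\beta(t))\,I(t)/N\le 0$, strictly negative while $I(t)>0$ and $\mathrm{Var}(\beta(0))>0$; this is the content of the displayed threshold computation, and it makes $T(t)=\gamma N/\overline{\beta(t)}$ nondecreasing and bounded. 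I would then read off the qualitative consequences valid for \emph{every} admissible $S(0,\cdot)$: $S(t)$ decreases monotonically to some $S_\infty\in(0,N)$; $I(t)$ is unimodal, increasing exactly while $S(t)>T(t)$ and decreasing thereafter, with $I(t)\to0$; hence each trajectory tends to an equilibrium $(S_\infty,0)$ and crosses the stability threshold at most once. These are precisely the features of the undistributed $SIR$ phase portrait (there $T\equiv\gamma N/\beta$ is constant), so as the bifurcation parameter $\rho(w)=\beta$ travels through the homogeneous phase--parametric diagram it passes the transcritical value $S^{*}=\gamma N/\rho$ at most once in both cases and the two systems realize the same ordered list of regimes — the sense of dynamic equivalence fixed in Subsection~\ref{equivalence}.

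\emph{Delta initial distribution.} If $S(0,w)=\delta_{\beta(w)-\beta}S(0)$ then $M_\beta(0,\lambda)=e^{\beta\lambda}$, so $\overline{\beta(t)}\equiv\beta$ and the transport system \ref{SystemSIR} reduces term by term to the classical $SIR$ system; in particular its initial exponential growth rate and its final-size relation coincide with those of the undistributed model. For the ``only if'' statement, the linearization of $\dot I$ at $(S(0),0^{+})$ has rate $\overline{\beta(0)}S(0)/N-\gamma$ against $\beta S(0)/N-\gamma$, so the initial behaviors agree exactly when $\overline{\beta(0)}=\beta$, equivalently when the two $R_{0}$ values coincide — the computation already carried out just before the lemma.

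\emph{Worst-case distribution.} Here I would pass to the final-size description. With $R(0)=0$, conservation $S+I+R=N$ and $\dot q=-I/N$ give $q(t)=-R(t)/(\gamma N)$, hence $S(t)=S(0)M_\beta\!\left(0,-R(t)/(\gamma N)\right)$ and the scalar equation $\dot R=-\gamma\,\Psi(R)$ with $\Psi(R):=S(0)M_\beta\!\left(0,-R/(\gamma N)\right)-N+R$; the delta case is identical with $\Psi$ replaced by $\Psi_\delta(R):=S(0)e^{-\beta R/(\gamma N)}-N+R$. Since $u\mapsto e^{u\lambda}$ is convex for every $\lambda$, Jensen's inequality against the initial susceptible density (whose $\beta$-mean is the shared value $\overline{\beta(0)}=\beta$) gives $M_\beta(0,\lambda)\ge e^{\beta\lambda}$, strictly for $\lambda\neq0$ unless the density is degenerate, whence $\Psi\ge\Psi_\delta$ pointwise. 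The comparison principle for scalar ODEs, with $R(0)=R_\delta(0)=0$, then forces $R(t)\le R_\delta(t)$ for all $t$ and, letting $t\to\infty$, $R_\infty\le R_\infty^{\delta}$ (equivalently, the relevant root of $\Psi$ lies above that of $\Psi_\delta$). Feeding $R(t)\le R_\delta(t)$ back through Jensen gives $S(t)\ge S_\delta(t)$, so at every instant the number ever infected $N-S(t)$, and in particular the final size, is maximal for the delta distribution — strictly so whenever $\mathrm{Var}(\beta(0))>0$. The only delicate point is the first assertion: ``qualitatively identical'' must be pinned down as the matching list of phase-portrait features above, and the real work there is excluding recurrent epidemic waves or oscillations, which is exactly what monotonicity of $\overline{\beta(t)}$ — hence of $T(t)$ — delivers; the remaining two parts are then routine, resting only on Jensen's inequality and a one-dimensional comparison argument.
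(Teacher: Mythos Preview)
Your treatment of the first two assertions matches the paper's: both simply unpack the monotonicity of $\overline{\beta(t)}$ and the identification of $R_0$ already established in the text preceding the lemma. The third assertion---the worst-case claim---is where you take a genuinely different route. The paper bounds $\overline{\beta(t)}\le\overline{\beta(0)}$ (by monotonicity) inside the integral representation $S_\infty=S(0)\exp\bigl(-\tfrac{1}{N}\int_0^\infty\overline{\beta(u)}I(u)\,du\bigr)$ and then compares the resulting inequality to the undistributed final-size relation. You instead invoke Jensen's inequality on the moment-generating function, $M_\beta(0,\lambda)\ge e^{\beta\lambda}$, and feed this into a scalar comparison principle for $\dot R=-\gamma\Psi(R)$. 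Your argument buys a strictly stronger conclusion---$R(t)\le R_\delta(t)$ at every time, not just asymptotically---and, notably, does not actually rely on the monotonicity of $\overline{\beta(t)}$ for this part, whereas the paper's bound does. The paper's approach is shorter but leaves implicit the step from $S_\infty^{\mathrm{dist}}\ge S(0)e^{-\overline{\beta(0)}(N-S_\infty^{\mathrm{dist}})/(N\gamma)}$ to $S_\infty^{\mathrm{dist}}\ge S_\infty^{\mathrm{undist}}$, which requires a monotonicity observation about the final-size map; your comparison argument sidesteps that entirely.
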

\begin{proof}
The equivalence claims are all proven in the text preceding this Lemma.  To prove the worst case scenario claim observe that since $q(0)=0$, and $I(t)\rightarrow0$, we have that $q(t)$ monotonically decreases to some value $\eta\in(-\infty,0)$. The derivative of $\overline{\beta(t)}$ with respect to $q(t)$ is $Var(\beta(t))>0$, implying $\overline{\beta(t)}$ decreases monotonically to $\epsilon\in(0,\overline{\beta(0)})$. Note the final epidemic size calculation
$$
S_\infty=S(0)e^{-\int_0^\infty\overline{\beta(u)}I(u)du},
$$
and the inequality
$$
\int_0^\infty\overline{\beta(u)}I(u)du\leq\overline{\beta(0)}\int_0^\infty I(u)du=-\frac{\overline{\beta(0)}}{\gamma}\frac{(S_\infty-N)}{N}.
$$
This implies that the final size of the susceptible population, $S_\infty$, for the undistributed case is minimal given that the distribution has an equivalent initial mean susceptibility, i.e., 
$$
S(0)e^{-\int_0^\infty\overline{\beta(u)}I(u)du}\geq S(0)e^{-\frac{\overline{\beta(0)}}{N\gamma}(N-S_\infty)}.
$$
\end{proof}\qed

\section{``Blue-Sky" Births \& Open Populations}\label{sec:BlueSky}

The most straightforward manner to ``open" the population of the aforementioned $SIR$ model is to suppose newborns are susceptible and are birthed from each epidemiological class which experience proportionate removal from the system.  This results in the distributed system
\begin{eqnarray}
\nonumber\dot{S}(t,w)&=&\Lambda N(t,w)-\beta(w)S(t,w)\int\frac{I(t,w)}{N(t)}dw-\mu S(t,w),\\
\dot{I}(t,w)&=&\beta(w)S(t,w)\int\frac{I(t,w)}{N(t)}dw-(\gamma+\mu)I(t,w),\label{Wideopen}
\\
\nonumber N(t,w)&=&N(0,w)e^{(\Lambda-\mu)t}.
\end{eqnarray}
However, both $I(t,w)$ and $R(t,w)$ produce members of $S(t,w)$ (and thus the ODE for $S(t,w)$ cannot be solved via separation of variables).  Nevertheless, continuing as before, let $\dot{q}(t)=-\frac{I(t)}{N(t)}$ to get
$$S(t,w)=\left(\Lambda N(0,w)\int_0^te^{\Lambda r-\beta(w)q(r)}dr+S(0,w)\right)e^{\beta(w)q(t)-\mu t}.
$$
The method for solving this equation was via the integrating factor $e^{-\beta(w)q(t)+\mu t}$ as opposed to separation of variables, as in the closed population case, but as before dependent on intial data and the current value of $q(t)$. The complication comes into play by requiring knowledge of the past values of $q(t)$ through the integral.  We may now consider the transport system
\begin{eqnarray}
\nonumber\dot{S}(t)&=&\Lambda N(t)-\overline{\beta(t)}S(t)\frac{I(t)}{N(t)}-\mu S(t),\\
\nonumber\dot{I}(t)&=&\overline{\beta(t)}S(t)\frac{I(t)}{N(t)}-(\gamma+\mu)I(t),\\
\nonumber N(t)&=&N(0)e^{(\Lambda-\mu)t},\\
\nonumber\overline{\beta(t)}
&=&\frac{\Lambda\int \beta(w)P(w)e^{\beta(w)q(t)}\int_0^te^{\Lambda r-\beta(w)q(r)}drdw+\int \beta(w)P_S(w)e^{\beta(w)q(t)}dw}{\int \left(\Lambda P(w)\int_0^te^{\Lambda r-\beta(w)q(r)}dr+P_S(w)\right)e^{\beta(w)q(t)}dw},\\
\nonumber\dot{q}(t)&=&-\frac{I(t)}{N(t)},
\end{eqnarray}
where $P(w)=\frac{N(0,w)}{N(0)}, P_S(w)=\frac{S(0,w)}{N(0)}\approx P(w),$ and $M_{\beta|S}$ is the moment generating function conditioned on $P_S(w)$.  It should be clear that the set of possible qualitative behaviors from the undistributed case\footnote{The undistributed system is a homogeneous system and thus, by rescaling to proportionate variables we may equate stability analysis of fixed points of the rescaled system to stability analysis of the exponential trajectories of the original, undistributed system. The disease free equilibrium (trajectory) is attracting if and only if $\beta<\gamma+\Lambda$.  When the disease free state is not attracting there is an endemic equilibrium (trajectory) in the relevant phase space which is stable.} are the only options for the evolution of this transport system.  However, the nature of $\overline{\beta(t)}$ (whether it is increasing, decreasing, or both) is left as an open problem. The difficulty is highlighted when considering the derivative of $\overline{\beta(t)}$ with respect to $t$:
$$
-\frac{I(t)}{N(t)}Var\left(\overline{\beta(t)}\right)+\frac{\Lambda\left(\overline{\beta(t)}-\overline{\beta_N(t)}\right)}{S(t)}\left(\mu N(t)-\overline{\beta_N(t)}\right),
$$
where $\overline{\beta_N(t)}=\frac{\int\beta(w)N(0,w)dw}{N(t)}$.  Note that $\overline{\beta(0)}\approx\overline{\beta_N(0)}=\overline{\beta_N(t)}$.  Thus at time $t=0$ it is true that $\overline{\beta(t)}$ is decreasing.  For $t>0$ the sign of $\frac{\Lambda\left(\overline{\beta(t)}-\overline{\beta_N(t)}\right)}{S(t)}\left(\mu N(t)-\overline{\beta_N(t)}\right)$ is equivalent to that of $$\left(\overline{\beta(t)}-\overline{\beta_N(t)}\right)\left(\mu e^{(\Lambda-\mu)t}\left(\int N(0,w)dw\right)^2-\int\beta(w)N(0,w)dw\right).$$ Given this information it is feasible that the exponential trajectory for the transport system could oscillate between being attracted to the disease free trajectory and the endemic trajectory.  

{ Furthermore, if $\overline{\beta(w)}\geq1$ for all $w$ then the derivative of $\overline{\beta(t)}$ with respect to $q(t)$ is always positive:
$$
\frac{d\overline{\beta(t)}}{dq(t)}=Var(\overline{\beta(t)})+\frac{\Lambda}{S(t)}\int N(t,w)\left(\overline{\beta(t)}-1\right)dw>0.
$$
Since $q(t)$ is monotonically decreasing we may infer in this case that there exists a time $\tau<\infty$ such that for all $t\geq\tau$, $\overline{\beta(t)}<\gamma+\Lambda$.  This implies that the disease will eventually ``burn itself out" and the disease free trajectory will be stable. }

This does not seem to have opened many analytical pathways as in the closed case, however this should be seen as a boon for numerical computation.  The original system involved (in general) an infinite number of ordinary differential equations to integrate numerically.  However, the above intergo-differential system consists of a finite number of equations to solve numerically (involving and initial conditions $S(0,w), N(0,w),$ and $\beta(w)$ and the solution trajectory of $q(t)$ up to and including the current time); a tractable computation problem which will have an identical solution to the infinite dimensional ODE case and does not involve approximation with respect to the dimensionality of the system is therefore possible. 

\subsection{Pure Inheritance}\label{subsec:Inher}
A method to circumvent the {\em blue-sky births} into $S(t,w)$ is to assume the malady, immunity to it, and the susceptibility to it is transferred to new borns.  This inheritance mechanism is weak at best because 1) the additions and removals to the system are not solely births and deaths in general but could be immigration and emmigration from the area in question and 2) we have to further assume the father's status confers nothing onto new-borns.  With these caveats in mind, one may formulate:
\begin{eqnarray}
\nonumber\dot{S}(t,w)&=&\Lambda S(t,w)-\beta(w)S(t,w)\frac{I(t)}{N(t)}-\mu S(t,w),\\
\nonumber\dot{I}(t,w)&=&\Lambda I(t,w)+\beta(w)S(t,w)\frac{I(t)}{N(t)}-(\gamma+\mu)I(t,w),\\
\nonumber\dot{R}(t,w)&=&\Lambda R(t,w)+\gamma I(t,w)-\mu R(t,w),\\
\nonumber N(t,w)&=&N(0,w)e^{(\Lambda-\mu)t}.
\end{eqnarray}
The solution to $S(t,w)$ may then be found via separation of variables as
$$S(t,w)=S(0,w)e^{(\Lambda-\mu)t+\beta(w)q(t)},$$
with $\dot{q}(t)=-\frac{I(t)}{N(t)}$. Integrating each ODE over $w$ gives the transport system
\begin{eqnarray}
\nonumber\dot{S}(t)&=&\Lambda S(t)-\overline{\beta(t)}S(t)\frac{I(t)}{N(t)}-\mu S(t),\\
\nonumber\dot{I}(t)&=&\Lambda I(t)+\overline{\beta(t)}S(t)\frac{I(t)}{N(t)}-(\gamma+\mu)I(t),\\
\dot{R}(t)&=&\Lambda R(t)+\gamma I(t)-\mu R(t), \label{opensystem}\\
\nonumber N(t)&=&N(0)e^{(\Lambda-\mu)t},\\
\nonumber \overline{\beta(t)}&=&\frac{\int\beta(w)S(0,w)e^{\beta(w)q(t)}dw}{\int S(0,w)e^{\beta(w)q(t)}dw},\\
\nonumber\dot{q}(t)&=&-\frac{I(t)}{N(t)}.
   \end{eqnarray}
We may recast System \ref{opensystem} into a system with proportionate variables $s(t)=\frac{S(t)}{N(t)}$, $i(t)=\frac{I(t)}{N(t)}$ and $r(t)=\frac{R(t)}{N(t)}$, each trapped within the interval $[0,1]$.  The resulting non-autonomous system is given by
\begin{eqnarray}
\nonumber \dot{s}(t)&=&-\overline{\beta(t)}s(t)i(t),\\
\nonumber \dot{i}(t)&=&\overline{\beta(t)}s(t)i(t)-\gamma i(t),\\
\nonumber \dot{r}(t)&=&\gamma i(t),\\
\nonumber \dot{q}(t)&=&-i(t),
\end{eqnarray}
with the definition of $\overline{\beta(t)}$ left unchanged.  This system exhibits the same dynamics as the closed $SIR$ population transport equations in System \ref{SystemSIR}, save that the actual population counts travel along exponential solution trajectories. This implies that opening the population as in System \ref{opensystem} may not induce oscillations, where in the original open system given by \ref{Wideopen} we were not able to definitively rule out oscillatory behavior (it could not be shown that $\overline{\beta(t)}$ was monotonic).  

\subsection{Sterilization}\label{subsec:Steri}
The zoonoses {\em Trichomoniasis}, {\em Salmonellosis}, and {\em Leptospirosis} are infections in cows  that may impart sterility on the individual \citep{vandeplassche1982reproductive}.  Once a heifer has been infected with these diseases the next pregnancy will result in abortion. With {\em Salmonellosis} and {\em Leptospirosis} it is unclear if future pregnancies result in abortions even if the cow shows no signs of infectiousness, but upon true recovery, after a short time spent immune to the disease, the heifer is again susceptible to infection and may conceive and calf normally until reinfected.  This dynamic, similar to an $SIS$ model (the immunity is so short that the rate from $R$ to $S$ will be disproportionately large), can be shown to be completely incompatible with the transport equation technique\footnote{I've omitted showing the calculations for $SIS$ and $SIRS$ models but the reentry into the susceptible class causes the distributed equations to be completely unsolvable in any meaningful way.  The solution for $I(t)$ in the distributed susceptibility $SIS$ model looks very similar to the solution of the non-autonomous $SIS$ model \citep{lopez2010logistic}, but it may be shown that the solution is both implicit (the parameters ``depend" on $I(t)$) and incomplete (the parameters require that $I(t,w)$ be solved, which cannot be done).}. 

{\em Papillomaviruses} in sheep have both an acute and chronic stage.  During the acute stage the sheep is infectious and any pregnancy during which the sheep is in the acute phase will result in abortion \citep{oriel1974sexually}. The passing to the chronic phase causes scarification of the fallopian tubes, as it does in humans.  This scarring causes infertility in addition to making the sheep more susceptible to other STDs and STIs. While in the chronic phase the sheep is still infectious, but at a much lower level than when in the acute phase \citep{oriel1974sexually}.  I simplify this dynamic by supposing the infections caused by sheep in the chronic phase is negligible and cast the dynamics into an $SIR$ setting with variable susceptibility.  The variable susceptibility serves an amalgamation of effects that contribute to susceptibility: nutrition, infection history, cleanliness of environment, etc....

The following model suppose a population whose growth is naturally limited, modeled via logistic growth, and is single sex (females only). I introduce papillomavirus into the population noting that it 1) causes no death due to infection and 2) causes permanent infertility in infectious (acute) and recovered/immune (chronic) individuals. Suppose a logistic growth for the population given by 
$$
\dot{N}(t)=\lambda N(t)-\frac{\lambda}{K}N^2(t),
$$
and rationalize the terms mechanistically as a birth process $\lambda N(t)$ and a density dependent death process $\lambda N(t)\frac{N(t)}{K}$.  By introducing a sterilizing disease, and imparting differential susceptibility, one arrives at
\begin{eqnarray}
\nonumber \dot{S}(t,w)&=&\lambda S(t,w)\left(1-\frac{N(t)}{K}\right)-\beta(w)S(t,w)\frac{I(t)}{N(t)},\\
\dot{I}(t)&=&\int\beta(w)S(t,w)dw\frac{I(t)}{N(t)}-\left(\gamma+\lambda\frac{N(t)}{K}\right)I(t),\label{logisticopen}\\
\nonumber \dot{R}(t)&=&\gamma I(t)-\lambda R(t)\frac{N(t)}{K},\\
\nonumber \dot{N}(t)&=&\lambda \int S(t,w)dw-\lambda N(t)\frac{N(t)}{K}.
\end{eqnarray}
Introduce the transport variables $\dot{u}(t)=-\frac{N(t)}{K}$ and $\dot{v}(t)=-\frac{I(t)}{N(t)}$ to arrive at
$$S(t,w)=S(0,w)e^{\lambda t+u(t)+\beta(w)v(t)},$$
and thus
$$S(t)=e^{\lambda t+u(t)}\int S(0,w)e^{\beta(w)v(t)}dw.$$
Defining 
$$\overline{\beta(t)}=\frac{\int\beta(w)S(0,w)e^{\beta(w)v(t)}dw}{\int S(0,w)e^{\beta(w)v(t)}dw},$$
and
%
supposing the total population is less than $K$, one may rescale to state variables in $[0,1]$ and define the biologically valid domain via $T=\{(s,i,n)|s\geq0, i\geq0,n\in[0,1],s+i\leq 1\}$: 
\begin{eqnarray}
\nonumber\dot{s}(t)&=&\lambda s(t)(1-n(t))-\overline{\beta(t)}s(t)\frac{i(t)}{n(t)},\\
\nonumber\dot{i}(t)&=&\overline{\beta(t)}s(t)\frac{i(t)}{n(t)}-(\gamma+\lambda n(t))i(t),\\
\nonumber\dot{u}(t)&=&-n(t),\\
\nonumber\dot{v}(t)&=&-\frac{i(t)}{n(t)},\\
\nonumber\dot{n}(t)&=&\lambda s(t)-\lambda n^2(t),\\
\nonumber\overline{\beta(t)}&=&\frac{\int\beta(w)s(0,w)e^{\beta(w)v(t)}dw}{\int s(0,w)e^{\beta(w)v(t)}dw}.
\end{eqnarray}

The qualitative behavior of the undistributed case forms the bifurcation diagram that this non-autonomous transport system now moves through (the bifurcation parameter is a function of time).  To determine both the bifurcation diagram and the transient dynamics of the distributed parameter value one may consider the set of {\em temporary} fixed points for the distributed system that depends on the values for $\overline{\beta(t)}$. Note here what is and is not being done. By the methods of \citep{kareva2012preventing, kareva2011niche,karev2006mathematical,karev2010mathematical} one must construct a bifurcation diagram for the undistributed case and then note that the bifurcation parameter will ``travel" through this for the distributed system (thus defining the transient behavior for the distributed system). This is not studying a general nonautonomous system by freezing time and performing typical autonomous qualitative analysis; although it does seem that way. I've found that this ``breaking of the rules" is one of the easier ways to demonstrate what the distributed parameter system is doing and forms an identical analysis to what is supposed to be done (i.e., construction of a bifurcation diagram and then consider ``travel" through it). While the following contains linearizations around ``fixed points" which depend on time this is simply a short cut. In fact both the distributed and undistributed systems are analyzed simultaneously (any ``fixed point" for the nonautonomous/heterogeneous/distributed system is a true fixed point for the autonomous/homogeneous/undistributed system).

I refer to the set of  temporary fixed points as the ``fixed curve".  This ``fixed curve" is a trajectory in ${R}^3$ and should the trajectory of the state variables, $s$, $i$, and $r$, come in contact with it, in the space-time sense, then their dynamics will cease for a moment.  However, if $i(t)\neq0$ then $\dot{v}(t)\neq0$ and $\overline{\beta(t)}$ may change; this results in a departure of the state variable trajectory from the fixed curve. The state-dynamics will then not be at equilibrium and continue to evolve.  The ``fixed curve" therefore corresponds to turning points (local minimums, maximums or inflection points) that occur for all three states simultaneously. For a given value of $\beta$ in the undistributed system the fixed point(s) for the system will lie on this fixed curve. There are two simple fixed points for the undistributed system:
$$
(s^*,i^*,r^*)=(0,0,0), (1,0,0).
$$
The trivial fixed point, all states $0$, is a saddle-type node (i.e., attracting down the $i(t)$ and $r(t)$ axes and repelling down the $s(t)$-axis).  The disease free equilibrium, DFE,  $(1,0,0)$ has eigenvalues $-\lambda, -\lambda,$ and $\overline{\beta(t)}-\lambda-\gamma$. Thus {\em when} $\overline{\beta(t)}<\lambda+\gamma$ the disease free equilibrium is a stable node, at other times it is a saddle-type node. The nullclines of $s(t)$ and $i(t)$ have an intersection at $\left(\frac{(\gamma+\lambda n(t))n(t)}{\overline{\beta(t)}},\frac{\lambda n(t)(1-n(t))}{\overline{\beta(t)}}\right)$.  One may then find the $r$-coordinate via $\dot{r}(t)=\gamma i(t)-\lambda r(t)n(t)=0$ which implies 
$$
r(t)=\frac{\gamma(1-n(t))}{\overline{\beta(t)}}.
$$
A final form for the time $t$ coordinates of the fixed curve  in terms of $\overline{\beta(t)}$ is
\begin{equation}
\left(s(t),i(t),r(t)\right)=\left(\frac{\gamma^2}{\left(\overline{\beta(t)}-\lambda\right)^2},\frac{\lambda\gamma\left(\overline{\beta(t)}-\lambda-\gamma\right)}{\left[\overline{\beta(t)}\left(\overline{\beta(t)}-\lambda\right)\right]^2},\frac{\gamma\left(\overline{\beta(t)}-\lambda-\gamma\right)}{\overline{\beta(t)}\left(\overline{\beta(t)}-\lambda\right)}\right).\label{FPTraj}
\end{equation}

This curve is valid biologically only when $\overline{\beta(t)}-\lambda-\gamma>0$ (when both the trivial and disease free equilibria are saddle nodes).  Eigenvalues about this curve are $-\frac{\gamma\lambda}{\overline{\beta(t)}-\lambda}$ and
$$
\nonumber\frac{1}{2\overline{\beta(t)}\left(\overline{\beta(t)}-\lambda\right)^2}\left[b\pm\sqrt{b^2-4\overline{\beta(t)}^2\gamma\lambda\left(\overline{\beta(t)}-\lambda\right)^2\left(\overline{\beta(t)}-\gamma-\lambda\right)}\right],
$$
where $b=\lambda\left(\overline{\beta(t)}-\lambda\right)\left(\overline{\beta(t)}-\gamma-\lambda\right)\left(\overline{\beta(t)}-1\right)$.  When the endemic equilibrium trajectory is inside $T$ the sign of the real part of the two complicated eigenvalues is determined by the sign of $\overline{\beta(t)}-1$ (the simple eigenvalue is negative).  Thus, the real part of these eigenvalues are positive if and only if $\overline{\beta(t)}>1$. If the endemic equilibrium trajectory is not in $T$, $\overline{\beta(t)}<\gamma+\lambda$, then there is always a positive eigenvalue for the endemic equilibrium trajectory, causing it to be unstable.

Being able to show that $\lim_{t\rightarrow\infty}\overline{\beta(t)}=\epsilon\in[0,\overline{\beta(0)})$ and that $T$ is a proper bounding set for the dynamics results in a complete understanding of the transient and asymptotic dynamics of this system. Supposing $i_\infty=\lim_{t\rightarrow\infty}i(t)>0$ immediately gives that $n_\infty>0$ and for all $t$ the transport variable $v(t)$ is decreasing, ($\dot{v}(t)<0$ implying that $v(t)\rightarrow-\infty$).  To study the limiting behavior of $\overline{\beta(t)}$, suppose $s_\infty>0$.  Rewriting $\overline{\beta(t)}$ as
$$
\overline{\beta(t)}=\frac{\int\beta(w)s(0,w)e^{\beta(w)v(t)+u(t)+\lambda t}dw}{s(t)},
$$
one would need to consider $\lim_{t\rightarrow\infty}\left[\beta(w)v(t)+u(t)+\lambda t\right]\asymp-\infty+\infty$, an improper form. From this we may not determine what $\overline{\beta(t)}$ limits to other than (with the knowledge that it is monotonically decreasing) some value $\epsilon\in[0,\overline{\beta(0)})$. 

To show that $T$ is a proper bounding region we must show that on the boundary of $T$ all flow is inwards. The positivity conditions ($s(t),i(t)\geq0$) are straightforward, and thus we consider $s(t)+i(t)=1$.  Along this curve  $\dot{s}(t)$ is always negative:
$$
\nonumber\dot{s}(t)=-\overline{\beta(t)}s(t)(1-s(t))\leq0.
$$
The $i(t)$ differential equation simplifies to $(1-s(t))(\overline{\beta(t)}-\gamma)$, and thus $i(t)$ is increasing for $s(t)>\frac{\gamma}{\overline{\beta(t)}}$.  This could be problematic; if the magnitude of flow in the $i(t)$ direction is greater than that in the $s(t)$ direction then the flow would escape $T$.  Assuming that $n(t)=1$ and $s(t)\leq n(t)$, we have that $n(t)$ is decreasing.  Thus $s(t)$ is decreasing more than $i(t)$ increases and the flow remains within the region $T$! Therefore $T$ is a proper bounding region for the dynamics should $(s(0),i(0),n(0))\in T$. 

Several possible transient/asymptotic dynamic situations for the fixed points $$(s^*,i^*,r^*)=(0,0,0), (1,0,0), \left(\frac{\gamma^2}{\left(\epsilon-\lambda\right)^2},\frac{\lambda\gamma\left(\epsilon-\lambda-\gamma\right)}{\left[\epsilon(\epsilon-\lambda)\right]^2},\frac{\gamma\left(\epsilon-\lambda-\gamma\right)}{\epsilon\left(\epsilon-\lambda\right)}\right),$$ the trivial, disease free (DFE), and the limit of the endemic equilibrium (EE), are possible:
\begin{enumerate}
\item The limit $\epsilon$ is both greater than $\gamma+\lambda$ and $1$: The trivial equilibrium and the DFE are saddles.  The trivial equilibrium attracts along the $i(t)$ and $r(t)$ axes and repels along the $s(t)$ axis.  The DFE attracts along the $s(t)$ axis and the line $s(t)+r(t)=n(t)$ and repels in the direction of $s(t)+i(t)\leq n(t)$.  The endemic equilibrium (EE) is also a saddle with two eigenvalues with positive real part and one negative eigenvalue. Solutions will oscillate about the current value of \ref{FPTraj} and asymptotically approach oscillation about the limit of \ref{FPTraj}.\label{oscillations}

\item The limit $\epsilon$ is in $(\gamma+\lambda,1)$: The trivial equilibrium and the DFE are saddles. The EE is now an attractor for all trajectories in the interior of $T$.\label{EE}

\item The limit $\epsilon$ is less than $\gamma+\lambda$: The DFE is the attractor for all trajectories in the interior of $T$, and the EE is outside of $T$ repelling trajectories into $T$.\label{DFE}

\item $\overline{\beta(t)}>\gamma+\lambda$ for $t\in[0,\tau)$ and $\overline{\beta(t)}<\gamma+\lambda$ for $t\in(\tau,\infty)$: Until time $\tau$ the system will appear as in Case \ref{oscillations}, Case \ref{EE}, or Case \ref{oscillations} and then Case \ref{EE} (depending on the sign of $\overline{\beta(t)}-1$).  After time $\tau$ the behavior will be as in Case \ref{DFE}.\label{Both}
\end{enumerate}

\section{Discussion}
Albeit in a narrow context, this paper has focused on one way to handle biological heterogeneity, via transport equations akin to the reduction theory of Karev.  Lemma \ref{lemma: distributed model} follows from the monotonicity of $\overline{\beta(t)}$ which allows one to prove qualitative equivalence (not a new result, but a new way to show it) as well as demonstrating that the undistributed case infects the most individuals.  This latter result in closed populations is intuitive but was shown to hold for the pure inheritance model as well.  

The simple, open $SIR$ model was reduced to a non-autonomous, finite dimensional system of ODEs, but due to the resulting nature of $\overline{\beta(t)}$ (involving the solution trajectory of the transport variable) is challenging to analyze. Without the ability to demonstrate monotonicity (at least after some time $\tau$) we are not afforded with the ability to rule out, or construct conditions for, sustained oscillatory behavior.

The pure inheritance model, System \ref{opensystem}, is perhaps an unbiologically realistic work-around for the ``blue-sky" births found in the simple, open $SIR$ case, but it was in a form receptive to the transport system representation. The reduction of the homogeneous system to one with states in $[0,1]$ allowed for the fixed point analysis of the closed $SIR$ to be applicable to stability analysis of exponential trajectories. Interestingly, despite being an $SIR$ model with demographic dynamics, the births being split into the three classes prevents the presence of oscillatory solutions. This splitting of births is perhaps what confers monotonicity on $\overline{\beta(t)}$. 

The sterilization model, a simplification of papillomavirus dynamics in sheep given by System \ref{logisticopen}, exhibits a wide range of transient dynamics due to the stability of the DFE being dependent on the relationship between $\overline{\beta(t)}$ and the other vital rates $\lambda$ and $\gamma$ as well as its own magnitude relative to 1.  Motivation for splitting up the population into variable susceptibility is in the spirit of black boxing several cofactors: nutrition, cleanliness of environment, genetic variability, and epidemiological history of the individual sheep. Furthermore, the model does not incorporate the probable culling/removal of infected sheep, the partial infectivity of those sheep in the chronic phase. Furthermore, it was assumed that chronic infected (recovery) necessarily led to sterility when early detection and treatment can prevent the scarification from occurring although the sheep would probably be removed from the breeding population to prevent more infection. 

It was demonstrated that for particular values of the mean distribution of infectivity that the disease free equilibrium is asymptotically stable, Cases \ref{DFE} and \ref{Both}, and that in the latter situation there is an interesting transient behavior of the solution curve ``chasing" an equilibrium which vanishes from the biologically meaningful space.  I've also given conditions for oscillatory behavior should the mean susceptibility {\em not} limit to zero, Case \ref{oscillations}. In this situation the dynamics are quite complex; because there is not a fixed limit cycle for finite time; there is however, a ``moving" oscillation through the phase space (the trajectory is oscillating and where the oscillation occurs is moving).  Finally in Case \ref{EE}, I was able to show that the endemic state is a ``global" attractor within $T$.

Extensions may include a more general theoretic version of the transport system theory of Karev as applied to models requiring separation of variables to be ``solved", as in the case of the simple, open $SIR$. Additionally, heterogeneity may be introduced to infectivity and recovery rate as Novozhilov did for closed populations \citep{novozhilov2008spread}. These heterogeneities were not introduced here because they induce further transport equations, and was beyond the scope of an explanation of the method through novel examples. The question of parameter estimation was raised in Section \ref{ClosedSIR}. While the robustness of estimators of parameters in epidemiological compartmental models has been demonstrated, \citep{Geoffard95}, for all but behavioral effects, it is of interest to investigate if the distribution, or the parameters of an assumed distribution, of an epidemiological parameter may be estimated from collected data. 

\section{Acknowledgements}
I would like to thank Georgy Karev and Artem Novozhilov for helping to answer my technical questions about the method. Additionally I acknowledge Juan Aparicio for putting in plain language some of the finer points.

\hspace{12 mm}
\bibliography{Paperbib}

\end{document}